\DeclareMathOperator{\sgn}{sgn}
\newcommand{\E}{\mathbb{E}}
\newcommand{\Var}{\mathrm{Var}}
\newcommand{\Prob}{\mathbb{P}}
\DeclareMathOperator*{\argmin}{arg\,min}
\newcommand*{\rom}[1]{\expandafter\@slowromancap\romannumeral #1@}
\newtheorem{remark}{Remark}
\newtheorem{corollary}{Corollary}
\newtheorem{conjecture}{Conjecture}
\newtheorem{assumption}{Assumption}
\title{\LARGE \bf On Remote Estimation with Multiple Communication Channels}
\author{Xiaobin Gao, Emrah Akyol, and Tamer Ba\c{s}ar \thanks{This research was supported in part by NSF under grant CCF 11-11342, and in part by the U.S. Air Force Office of
Scientific Research (AFOSR) MURI grant FA9550-10-1-0573.}\thanks{All authors are with the Coordinated Science Laboratory, University of Illinois at Urbana-Champaign, Urbana, IL 61801; emails: \{xgao16, akyol, basar1\}@illinois.edu}}
\begin{document}

\maketitle
\thispagestyle{empty}
\pagestyle{empty}

\begin{abstract}
This paper considers a sequential estimation and sensor scheduling problem in the presence of multiple communication channels. As opposed to the classical remote estimation problem that involves one perfect (noiseless) channel and one extremely noisy channel (which corresponds to not transmitting the observed state), a  more realistic additive noise channel with fixed power constraint along with a more costly perfect channel is considered. It is shown, via a counter-example, that the common folklore of applying symmetric threshold policy, which is well known to be optimal (for unimodal state densities) in the classical two-channel remote estimation problem, can be suboptimal for the setting considered. Next, in order to make the problem tractable, a side channel which signals the sign of the underlying state is considered. It is shown that, under some technical assumptions, threshold-in-threshold communication scheduling  is  optimal for this setting.  The impact of the presence of a noisy channel is analyzed numerically based on dynamic programming. This numerical analysis uncovers some rather surprising results inheriting known properties from the noisy and noiseless settings.
\end{abstract}

\section{Introduction}
This paper extends the joint sensor scheduling and remote state estimation problems, see e.g., \cite{Imer10, Lipsa11,Nayyar13,GaoACC15}, to a more realistic setting that involves multiple, noisy communication channels.

In \cite{Imer10}, which initiated this line of research, a special case of the problem was considered: Estimate a one-dimensional discrete-time stochastic process distributed independently and identically (i.i.d.) over a decision horizon of length $T$ using only $N \leq T$ measurements. Over the decision horizon of length $T$, the sensor had exactly $N$ opportunities to transmit its observation to the estimator.  The main difference from the work here is that these transmissions were assumed to be {\it error and noise free}.
The transmission decisions that minimize the average estimation error between the  process and its estimate were sought in the class of threshold based strategies and the optimal decision sequence was obtained via dynamic programming. Later, using majorization and related techniques, such threshold based strategies were shown to be optimal for this problem \cite{Lipsa11}.

In a recent prior work\cite{GaoCDC15}, the problem with perfect (noiseless) communication was extended to the noisy channel scenario, i.e., the perfect channel was replaced with a noisy one. Inclusion of noise in the channel poses a significant research challenge, since the sensor now has to encode its message before transmission, and the estimator has to consider this encoding mapping in its estimation mapping. This problem was solved in \cite{GaoCDC15} using the recent results on zero-delay communication \cite{akyol13SourceChannelCoding}. The adversarial zero-delay communication was studied in \cite{akyol2013jamming}, where it was shown that the optimal strategy for an adversarial agent with fixed jamming power is to render the effective channel noise distribution to match that of the source, so that the optimal encoding/decoding mappings are linear. Due to the minimax optimality property of such linear (or affine) mappings\cite{akyol2013jamming}, we pose the problem in an adversarial setting where the noise is generated by a jammer, and we take these communication mappings as affine. 

In this paper, we merge the perfect channel setting, studied in \cite{Imer10,Lipsa11} with the recently studied noisy setting\cite{GaoACC15,GaoCDC15}. An intuitive scheduling policy here  is to use threshold-in-threshold structure since symmetric thresholding has been shown to be optimal, for any unimodal state density, for the noiseless settings\cite{Lipsa11} (under some mild technical conditions). However,  when combined with a noisy channel, we show here that such a policy is no longer optimal and optimal strategy is rather hard to obtain. To facilitate the analysis, we next assume a (perfect) side channel between the encoder and the estimator, over which the sign of the observed state is transmitted. In this setting, in conjunction with some assumptions on the sensing policy and affine encoding-estimating policies, we show optimality of the threshold-in-threshold sensing policy. Armed with this result, we numerically obtain the optimal decision sequence, i.e., the evolution of threshold values in time, via dynamic programming. This numerical analysis demonstrates some rather surprising results inheriting the known properties from the noisy and noiseless settings. For example, the transmitter uses all communication opportunities for the perfect channel, while there might be such opportunities left at the end of the time horizon for communication over the noisy channel.


\section{Problem Formulation}
\label{ProblemFormulation}
\subsection{System Model}
\begin{figure}[h]
\centering
\includegraphics[height=32mm, width = 85mm]{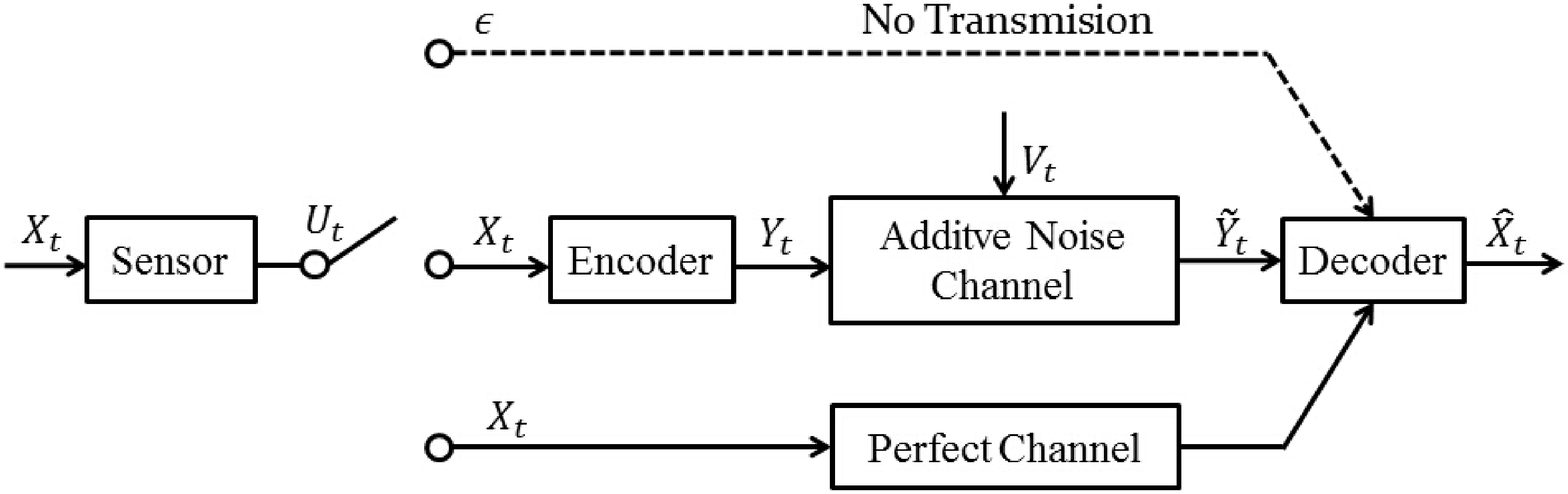}
\caption{System model}
\end{figure}
Consider a discrete time communication scheduling and remote estimation problem over a finite time horizon, i.e., $t=1,2,\ldots,T$. A one-dimensional source process $\{X_t\}$ is an independent identically distributed (i.i.d.) stochastic process with probability density function $p_X$. At time $t$, the sensor observes the state of the source $X_t$. Then, it decides whether and how to transmit its observation to the remote estimator (which is also called ``decoder''). Let $U_t\in\{0,1,2\}$ be the sensor's decision at time $t$. $U_t=0$ means that the sensor chooses not to transmit its observation to the decoder, hence  it sends a free symbol $\epsilon$ to the decoder representing nothing is transmitted. $U_t=1$ means that the sensor chooses to transmit its observation to the decoder over an additive noise channel.  Therefore, the sensor sends $X_t$ to an encoder, which then sends an encoded message, call it $Y_t$, to the communication channel. $Y_t$ is corrupted by an additive channel noise $V_t$. $\{V_t\}$ is a one-dimensional i.i.d. stochastic process with density $p_V$, which is independent of $\{X_t\}$. The encoder has average power constraint, that is,
$$
\E[Y^2] \leq P_T
$$
where $P_T$ is known and constant for all $t$. When $U_t = 2$, sensor chooses to transmit its observation over a perfect channel. Then, the decoder will receive $X_t$. Let $\tilde{Y}_t$ be the message received by decoder at time $t$, we have
$$
\tilde{Y}_t =
\begin{cases}
\epsilon, &\mbox{ if } U_t = 0 \vspace{0.1 cm}
\\ Y_t + V_t, &\mbox{ if } U_t = 1 \vspace{0.1 cm}
\\ X_t, &\mbox{ if } U_t = 2
\end{cases}
$$
After receiving $\tilde{Y}_t$, the decoder generates an estimate on $X_t$, denoted by $\hat{X}_t$. The decoder is charged for squared distortion $(X_t-\hat{X}_t)^2$.

\subsection{Communication Constraints}
We consider the optimization problem under two kinds of communication constraints, separately. In the first scenario, at each time $t$, the sensor is charged for its decision, i.e.,  there is a cost function associated with $U_t$, denoted by $c(U_t)$, such that
$$
c(U_t) =
\begin{cases}
0, &\mbox{ if } U_t = 0 \vspace{0.1 cm}
\\ c_1, &\mbox{ if } U_t = 1 \vspace{0.1 cm}
\\ c_2, &\mbox{ if } U_t = 2
\end{cases}
$$
where $c_1,c_2 \geq0$. $c(U_t)$ is also called communication cost at time $t$. Such kind of communication constraint is called {\it soft constraint}. In the second scenario, the sensor is not charged for transmitting its observations. However, the sensor is restricted to use the noisy channel and the perfect channel for no more than $N_1$ and $N_2$ times, respectively, i.e.,
$$
\sum_{t=1}^T \mathds{1}_{\{U_t = 1\}} \leq N_1, \;\;\; \sum_{t=1}^T \mathds{1}_{\{U_t = 2\}} \leq N_2
$$
where $\mathds{1}_{\{\cdot\}}$ is the indicator function, and $N_1$, $N_2$ are positive integers. Such kind of communication constraint is called {\it hard constraint}.

\subsection{Decision Strategies}
Assume that at time $t$, the sensor has memory of all its measurements by $t$, denoted by $X_{1:t}$, and all the decisions it has made by $t-1$, denoted by $U_{1:t-1}$. The sensor makes decision $U_t$ based on its current information $(X_{1:t},U_{1:t-1})$, that is,
$$
U_t=f_t(X_{1:t},U_{1:t-1})
$$
where $f_t$ is the sensor scheduling policy at time $t$ and $\textbf{f}=\{f_1,f_2,\ldots,f_T\}$ is the sensor scheduling strategy.

Assume that at time $t$, no matter whether and how the sensor decides to transmit the source output, it always transmits its decision $U_t$ to the encoder. Let $\tilde{X}_t$ be the message received by the encoder at time $t$; then,
$$
\tilde{X}_t =
\begin{cases}
(X_t,U_t), &\mbox{ if } U_t = 1 \vspace{0.1 cm}
\\ U_t, &\mbox{ otherwise }
\end{cases}
$$
Denote by $\tilde{X}_{1:t}$ the messages received by the encoder up to time $t$. Similar to the above, we assume that the encoder has memory on $\tilde{X}_{1:t}$, and all the encoded messages it has sent to the communication channel by $t-1$, denoted by $Y_{1:t-1}$. The encoder generates the encoded message $Y_t$ based on its current information $(\tilde{X}_{1:t},Y_{1:t-1})$, that is,
$$
Y_t=g_t(\tilde{X}_{1:t},Y_{1:t-1}),
$$
where $g_t$ is the encoding policy at time $t$ and $\textbf{g}=\{g_1,g_2,\ldots,g_T\}$ is the encoding strategy.

Assume that the decoder can deduce $U_t$ from $\tilde{Y}_t$. Furthermore, it is assumed that at time $t$, the decoder has memory on all the messages received from communication channels by $t$, denoted by $\tilde{Y}_{1:t}$. The decoder produces the estimate $\hat{X}_t$ based on its current information $\tilde{Y}_{1:t}$, namely,
$$
\hat{X}_t=h_t(\tilde{Y}_{1:t}),
$$
where $h_t$ is the decoding policy at time $t$ and $\textbf{h}=\{h_1,h_2,\ldots,h_T\}$ is the decoding strategy.
\begin{remark} \label{Shared information} Although we do not assume that the decoder has memory on $\hat{X}_{1:t-1}$, yet it can deduce them from $\tilde{Y}_{1:t-1}$ and $\{h_1,h_2,\ldots,h_{t-1}\}$. \end{remark}

\subsection{Optimization Problems}
Consider the setting described above, with the time horizon $T$, the probability density functions $p_X$ and $p_V$, and the power constraint $P_T$ as given.

Optimization problem with {\it soft constraint}:  Given the communication cost function $c(\cdot)$, determine  $(\textbf{f},\textbf{g},\textbf{h})$  that minimize
$$
J(\textbf{f},\textbf{g},\textbf{h})= \E \left\{\sum_{t=1}^T c(U_t)+{(X_t-\hat{X}_t)}^2\right\}.
$$

Optimization problem with {\it hard constraint}: Given the number of transmission opportunities $N_1$ and $N_2$, determine $(\textbf{f},\textbf{g},\textbf{h})$ that minimize
$$
J(\textbf{f},\textbf{g},\textbf{h})= \mathbb{E} \left\{\sum_{t=1}^T {(X_t-\hat{X}_t)}^2\right\}.
$$

\section{Preliminaries}
\label{prior work}
\subsection{Problem with Perfect Channel}
The communication scheduling and remote estimation problems with one perfect communication channel and soft/hard constraints have been studied in \cite{Imer10,Lipsa11,Nayyar13}. In this prior work, both i.i.d. source and Markov source were considered. In the case of i.i.d. source, it was assumed that the source density is symmetric and unimodal around $0$, namely,
$$
\begin{array}{lcl}
p_X(x) &=& p_X(-x), \;\;\;\forall\;x\in\mathbb{R} \vspace{0.1 cm}
\\ p_X(a) &\geq& p_X(b), \;\;\;\;\mbox{if } |a|\leq|b|
\end{array}
$$
One of the distortion metrics considered was the squared error. With the above assumptions, it was shown that the optimal communication scheduling policy at time $t$ is symmetric threshold-based, and the optimal estimation policy is also symmetric, that is,
$$
\begin{array}{lcl}
f_t(X_t) =
\begin{cases}
2, &\mbox{ if } |X_t| > \beta_t \vspace{0.1 cm}
\\ 0, &\mbox{ if } |X_t| \leq \beta_t
\end{cases}
,h_t(\tilde{Y}_t) =
\begin{cases}
\tilde{Y}_t, &\mbox{ if } U_t = 2 \vspace{0.1 cm}
\\ 0, &\mbox{ if } U_t = 0
\end{cases}
\end{array}
$$
where $\beta_t$ is the threshold at time $t$. In the problem with soft constraint, $\beta_t$ depends only on communication cost, which is independent of time. In the problem with hard constraint, $\beta_t$ depends on time $t$ and the number of communication opportunities left at time $t$, denoted by $E_t$, where
$$
E_t = N_2 -  \sum_{i=1}^{t-1} \mathds{1}_{\{U_i = 2\}}
$$
Furthermore, $\beta_t$ can be computed via dynamic programming (see \cite{Imer10} for details).
\subsection{Zero Delay Communication and Jamming}
In \cite{akyol13SourceChannelCoding}, the following problem was considered: an encoder wants to make a one-shot transmission sending an input signal $X$ to the decoder. The communication channel has a zero mean additive channel noise $V$, which is independent of $X$. Since the channel is noisy, the encoder first encodes the input signal $X$ according to some encoding policy $g$, and then sends the encoded message $Y$ to the communication channel. The encoder is assumed to have average encoding power constraint. The decoder receives the noise-corrupted message $Y+V$, denoted by $\tilde{Y}$, and then generates an estimate of $X$, denoted by $\hat{X}$, according to some decoding policy $h$. The optimization problem is to design the encoding and decoding policies $(g,h)$ to minimize the mean squared estimation error $\E[(X-\hat{X})^2]$ subject to encoding power constraint $\E[Y^2]\leq P_T$. It has been shown that once the characteristic functions of $X$ and $V$ satisfy so-called {\it matching conditions}, the optimal encoding and decoding policies are  affine as follows:
$$
\begin{array}{lcl}
g(X) &=& \alpha \cdot (X - \E[X]) \vspace{0.1cm}
\\ h(\tilde{Y}) &=& \frac{1}{\alpha}\frac{\gamma}{\gamma+1}\tilde{Y}+\E[X]
\end{array}
$$
where $\gamma:=\frac{P_T}{\sigma_V^2}$ is the signal to noise ratio (SNR), $\alpha=\sqrt{\frac{P_T}{\Var(X)}}$, $\Var(X)$ is the variance of $X$. Furthermore, minimum mean squared error is $\frac{\Var(X)}{1+\gamma}$.

Later in \cite{akyol2013jamming}, a jamming problem was considered where the communication channel noise is generated by an adversary, and it was shown that the affine encoding/decoding policies are minimax.

\subsection{Problem with Noisy Channel}
In the work of \cite{GaoACC15,GaoCDC15}, similar problems with only one additive noise channel with soft/hard constraints were analyzed. It was shown that if the source and the channel noise are i.i.d., and the communication cost function, the distortion metric, and the encoding power are time invariant, then the optimization problem over a finite time horizon with soft constraint collapses to a one-stage problem \cite[Theorem 2]{GaoACC15}, and the optimization problem with hard constraint can be converted to a one-stage optimization problem with soft constraint \cite[Theorems 2 and 3]{GaoCDC15}.


\section{The Problem with Soft Constraint}
\subsection{Conjecture and Corollary}
By an argument similar to that in \cite[Theorem 2]{GaoACC15}, the optimization problem with i.i.d. source and soft communication constraint collapses to a one-stage optimization problem. Hence for simplicity, we henceforth suppress the subscript for time. We make the following assumptions on the optimization problem.
\begin{assumption}
\label{assumption on the problem}
 The source density $p_X$ is symmetric and unimodal around zero.
\end{assumption}
\begin{assumption}
 The communication channel noise $V$ has zero mean, and fixed variance, denoted by $\sigma_V^2$.
\end{assumption}
\begin{assumption}
 The encoder and decoder are restricted to apply affine policies, namely
$$
\begin{array}{lcl}
g(X) &=& \alpha \cdot (X - \E[X|U=1]) \vspace{0.1cm}
\\ h(\tilde{Y}) &=& \frac{1}{\alpha}\frac{\gamma}{\gamma+1}\tilde{Y}+\E[X|U=1]
\end{array}
$$
where $\gamma:=\frac{P_T}{\sigma_V^2}$ is the signal to noise ratio, $\alpha=\sqrt{\frac{P_T}{\Var(X|U=1)}}$. $\Var(X|U=1)$ is the variance of $X$ condition that the sensor transmits the source output over the noisy channel.
\end{assumption}

The first assumption is standard. The second and third assumptions are consequences of  the jamming setting (that is, with worst-case approach, see \cite{akyol2013jamming} for details). Since the source is symmetric around zero, and the distortion metric is the squared error, which is also symmetric around zero, it is intuitive to conjecture that the communication scheduling policy is symmetric around zero. Also note that the problem with one perfect communication channel admits an optimal scheduling policy which is symmetric around zero.

 \begin{conjecture}
\label{conjecture on symmetric communication policy}
The optimal communication scheduling policy satisfies$$
f(x) = f(-x)\;\;\;\forall\; x\in\mathbb{R}
$$
\end{conjecture}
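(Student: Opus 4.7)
The natural route is a symmetrization argument. First, I would exploit the invariance of the problem under the reflection $X \mapsto -X$. For any sensor policy $f$ with decision regions $A_i := f^{-1}(\{i\})$, define the reflected policy $\bar f(x) := f(-x)$, whose regions are $-A_i$. Using the symmetry of $p_X$ (Assumption~\ref{assumption on the problem}), the zero mean of $V$, and the symmetry of the squared-error criterion, a direct computation shows that replacing $(f,g,h)$ by its correspondingly reflected triple---in particular, replacing $\E[X\mid U=1]$ by its mirror image in the affine mappings---leaves the joint law of $\bigl((X-\hat X)^2, c(U)\bigr)$ unchanged, so $J(\bar f,\bar g,\bar h) = J(f,g,h)$. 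In particular, if an optimal triple $(f^\star,g^\star,h^\star)$ exists, so does its reflection.

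Second, I would try to upgrade this invariance into pointwise symmetry. If the optimizer of $J$ is unique up to Lebesgue-null sets, then $f^\star = \bar f^\star$ a.e., which is exactly the desired identity $f^\star(x) = f^\star(-x)$.

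Third, to handle the non-unique case, I would construct a symmetric candidate by folding,
\[
f^{\mathrm{sym}}(x) := \begin{cases} f^\star(x), & x \geq 0, \\ f^\star(-x), & x < 0, \end{cases}
\]
and attempt to verify $J(f^{\mathrm{sym}}) \leq J(f^\star)$ by inserting the bias--variance decomposition
\[
\Prob(A_i)\Var(X \mid A_i) = \E\!\left[X^2 \mathds{1}_{A_i}\right] - \frac{\bigl(\E[X\mathds{1}_{A_i}]\bigr)^2}{\Prob(A_i)}
\]
into the $U=0$ and $U=1$ contributions to $J$ and comparing term by term with the analogous expansion for $f^\star$.

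The main obstacle lies in this last step. Folding preserves the second moments $\E[X^2\mathds{1}_{A_i}]$ but zeros out the conditional first moments $\E[X\mathds{1}_{A_i}]$, thereby eliminating the \emph{negative} correction term $(\E[X\mathds{1}_{A_i}])^2/\Prob(A_i)$ from both the $U=0$ and $U=1$ contributions. Those correction terms quantify the decoder's ability to exploit a non-zero conditional mean (producing a better-than-$0$ estimate when $A_i$ is asymmetric), and they can be strictly helpful. Consequently, the folded policy may strictly \emph{increase} $J$, and the symmetrization scheme does not by itself close the argument. I expect this to be the fundamental obstacle: the conjecture may in fact fail---an outcome consistent with the counter-example to the symmetric-threshold folklore announced in the abstract, since a short separate computation shows that \emph{among symmetric} policies the three pointwise costs $x^2$, $x^2/(\gamma+1)+c_1$, and $c_2$ depend only on $|x|$, forcing any symmetric optimum to be threshold-in-threshold. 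A rigorous proof of the conjecture, if one exists, would therefore require additional structural hypotheses on $p_X$ (e.g.\ strict log-concavity) together with a finer exchange argument that rearranges mass across $\pm x$ while carefully controlling the conditional-mean correction terms.
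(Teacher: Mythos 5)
You have correctly recognized that this statement cannot be proved: the paper offers no proof of Conjecture~\ref{conjecture on symmetric communication policy}, and the uniform-density counter-example that immediately follows Corollary~\ref{corollary with conjecture on non side channel problem} shows that the conjecture is in fact \emph{not} valid from the standpoint of global optimality. Your diagnosis of the obstacle also matches the paper's mechanism, just viewed from the opposite direction. You observe that folding a policy preserves $\E[X^2\mathds{1}_{A_1}]$ but zeros out the conditional-mean correction $\bigl(\E[X\mathds{1}_{A_1}]\bigr)^2/\Prob(A_1)$, thereby potentially \emph{increasing} the noisy-channel distortion, which is proportional to $\Var(X\mid X\in A_1)$. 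The paper's counter-example exhibits the same effect the other way around: the best symmetric threshold-in-threshold policy has the noisy region $\mathcal{T}_1^{f^\ast}=[-\beta_2^\ast,-\beta_1^\ast)\cup(\beta_1^\ast,\beta_2^\ast]$ split across the origin, and shifting one half to form a single connected interval of equal Lebesgue measure strictly decreases $\Var(X\mid\mathcal{T}_1)$ and hence $J$. Finally, the resolution you speculate about---extra structural hypotheses plus a finer exchange argument---is essentially what the paper does: it introduces a perfect side channel carrying $\sgn(X)$, which lets the encoder center on the sign-conditional mean $\E[X\mid U=1,S]$ and thereby recovers the benefit of the correction term you identified; under the additional \emph{assumption} of symmetry this yields Corollary~\ref{corollary for modified problem}. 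One small slip in your aside on symmetric policies: the pointwise cost of the noisy channel is $c_1 + x^2/(\gamma+1)^2 + \gamma^2\sigma_V^2/(\alpha^2(\gamma+1)^2)$, not $c_1 + x^2/(\gamma+1)$, though both are even in $x$ so the threshold-in-threshold conclusion you draw there survives.
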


The following corollary is a sequence of Conjecture \ref{conjecture on symmetric communication policy}.
\begin{corollary}
\label{corollary with conjecture on non side channel problem}
If Conjecture \ref{conjecture on symmetric communication policy} holds,  the optimal  scheduling policy is of the threshold-in-threshold type:
\begin{equation}
\label{threshold in threshold expression}
f(x) =
\begin{cases}
0, & \mbox{ if } |x| \leq \beta_1 \vspace{0.1 cm}
\\ 1, & \mbox{ if } \beta_1 < |x| \leq \beta_2 \vspace{0.1 cm}
\\ 2, & \mbox{ if } |x| > \beta_2
\end{cases}
\end{equation}
where $\beta_1$ and $\beta_2$ are called thresholds, and $0 \leq \beta_1 \leq \beta_2$.
\end{corollary}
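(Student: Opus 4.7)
First I would fix a sensor policy $f$ satisfying Conjecture~\ref{conjecture on symmetric communication policy}. Combined with the symmetry of $p_X$ in Assumption~\ref{assumption on the problem}, this implies that the conditional density $p_{X\mid U=u}$ is symmetric around zero for every $u\in\{0,1,2\}$, and hence $\E[X\mid U=u]=0$. Consequently the MMSE decoding when $U=0$ is $\hat X=0$, and the assumed affine encoder/decoder for $U=1$ reduces to the centered form $Y=\alpha X$, $\hat X = \tfrac{\gamma}{1+\gamma}X + \tfrac{\gamma}{\alpha(1+\gamma)}V$, with $\alpha^2 = P_T/\Var(X\mid U=1)$.

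Next I would compute the per-state expected cost
$$
J_u(x) := c(u) + \E\bigl[(X-\hat X)^2 \mid X=x,\,U=u\bigr]
$$
for each action by direct substitution. For $u=0$ and $u=2$ this is immediate: $J_0(x) = x^2$ and $J_2(x) = c_2$. For $u=1$, using $X-\hat X = \tfrac{1}{1+\gamma}x - \tfrac{\gamma}{\alpha(1+\gamma)}V$ together with $\sigma_V^2/\alpha^2 = \Var(X\mid U=1)/\gamma$ gives
$$
J_1(x) = c_1 + \frac{x^2}{(1+\gamma)^2} + \frac{\gamma\,\Var(X\mid U=1)}{(1+\gamma)^2}.
$$
Crucially, each $J_u(x)$ depends on $x$ only through $x^2$.

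The final step is pointwise minimization: since the overall objective is $\E[J_{f(X)}(X)]$, any optimal policy must satisfy $f^{\star}(x) \in \argmin_{u\in\{0,1,2\}} J_u(x)$, so the decision regions depend on $x$ only through $|x|$. Viewed as functions of $x^2$, $J_0$ and $J_1$ are affine while $J_2$ is constant; $J_0$ has slope $1$ and zero intercept, whereas $J_1$ has the strictly smaller slope $1/(1+\gamma)^2$ and a strictly positive intercept. Hence $J_0 \leq J_1$ on an interval $[-\beta_1,\beta_1]$, where $\beta_1\geq 0$ is the unique nonnegative root of $J_0=J_1$. Outside this interval, either $J_2$ eventually overtakes $J_1$ at some $\beta_2>\beta_1$, or $J_1$ is already dominated by $J_2$ everywhere, in which case one sets $\beta_1=\beta_2$ and Action~$1$ is never used. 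Reading off the $\argmin$ in each region then yields Action~$0$ on $|x|\leq \beta_1$, Action~$1$ on $\beta_1 < |x|\leq \beta_2$, and Action~$2$ on $|x|>\beta_2$, which is precisely~(\ref{threshold in threshold expression}).

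The main obstacle I foresee is that $J_1$ depends on $\Var(X\mid U=1)$, which in turn depends on the thresholds through the induced conditional distribution, so $(\beta_1,\beta_2)$ are determined only implicitly as a fixed point. Because the structural argument above uses only the qualitative shape of $J_1$ (affine in $x^2$ with positive intercept and slope strictly less than one), this self-referential aspect does not affect the threshold-in-threshold conclusion, though it will complicate any closed-form computation of the thresholds themselves.
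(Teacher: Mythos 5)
Your overall approach — reduce to the symmetric centered encoder, write down the per-action conditional costs $J_0, J_1, J_2$, and determine the regions by pointwise $\argmin$ — is exactly the paper's approach, and your cost formulas agree with the paper's (your $\gamma\,\Var(X\mid U=1)/(1+\gamma)^2$ equals the paper's $\gamma^2\sigma_V^2/\bigl(\alpha^2(\gamma+1)^2\bigr)$ after substituting $\alpha^2 = P_T/\Var = \gamma\sigma_V^2/\Var$). However, there is a genuine gap in your case analysis: you define $\beta_1$ solely as the crossing point of $J_0$ and $J_1$, and then declare that Action~$0$ is chosen on all of $|x|\le\beta_1$, without ever comparing $J_0$ against $J_2$. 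This is not correct. The inner threshold must be $\beta_1 = \min\{\beta_{01},\beta_{02}\}$, where $\beta_{01}$ solves $J_0=J_1$ and $\beta_{02}=\sqrt{c_2}$ solves $J_0=J_2$; the paper takes exactly this minimum.

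To see that the omission matters, take $c_1=0$, $c_2=1$, $\gamma=1$, and suppose $\Var(X\mid U=1)=100$ (the argument must hold for every admissible value of $\alpha$, equivalently of this variance, since the paper's strategy is ``for any $\alpha>0$, show that \eqref{threshold in threshold expression} holds''). Then $J_0(x)=x^2$, $J_1(x)=x^2/4+25$, $J_2(x)=1$, giving $\beta_{01}=10/\sqrt{3}\approx 5.77$ and $\beta_{02}=1$. Since $J_1>J_2$ everywhere, your rule sets $\beta_1=\beta_2=\beta_{01}$ and prescribes Action~$0$ on $[0,10/\sqrt3]$; but at $x=2$ we have $J_2=1<4=J_0$, so Action~$2$ is strictly better and the true policy is Action~$0$ on $[0,1]$ and Action~$2$ on $(1,\infty)$. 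The conclusion (threshold-in-threshold, here degenerating to $\beta_1=\beta_2=\sqrt{c_2}$) is still correct, but your argument does not establish it: you need to insert the $J_0$ versus $J_2$ comparison and take the minimum of the two crossing points, exactly as in the paper's proof, and then observe that for $x>\beta_1$ one has $J_0 \ge \min\{J_1,J_2\}$ so the remaining comparison reduces to $J_1$ versus $J_2$.
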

\begin{proof}
Let $\mathcal{T}^f_{0}$, $\mathcal{T}^f_1$, $\mathcal{T}^f_2$ be the {\it non-transmission region}, the {\it noisy transmission region}, and the {\it perfect transmission region},
respectively, according to communication policy $f$, i.e.,
$$
\mathcal{T}^f_{i} := \{x\in\mathbb{R} | f(x) = i\},\;\;i\in\{0,1,2\}
$$
Then, the  conjecture states that $\mathcal{T}^f_{0}$, $\mathcal{T}^f_1$, $\mathcal{T}^f_2$ are symmetric around zero.

When $x\in \mathcal{T}^f_{0}$, the sensor does not send anything to the decoder but the free symbol $\epsilon$. Then, the decoder knows that $X\in \mathcal{T}^f_{0}$. Hence, $\hat{X}=\E[X| X\in \mathcal{T}^f_{0}]$. By Conjecture \ref{conjecture on symmetric communication policy}, we have $\hat{X}=\E[X| X\in \mathcal{T}^f_{0}]=0$.

When $x\in \mathcal{T}^f_{2}$, the sensor chooses to transmit its observation over the perfect channel. Hence, the optimal decoder is to report the received message, that is, $\hat{X} = x$.

When $x\in \mathcal{T}^f_{1}$, the sensor chooses to transmit its observation over the noisy channel. By Conjecture \ref{conjecture on symmetric communication policy}, we have
$$
\E\big[X|U=1\big] = \E\big[X\big \vert X\in \mathcal{T}^f_{1}] = 0
$$
where the second equality is due to the fact that $p_X$ and $\mathcal{T}^f_{1}$ are symmetric around zero. Hence, we have
$$
g(X) = \alpha \cdot X, \;\;\; h(\tilde{Y}) = \frac{1}{\alpha}\frac{\gamma}{\gamma+1}\tilde{Y}
$$
where $\gamma = \frac{P_T}{\sigma_V^2}$ is given, and $\alpha$ depends on the choices of $f$ and $p_X$. We are done with the proof if we can show that given any $\alpha>0$,  \eqref{threshold in threshold expression} is satisfied. Suppose the sensor observes the realization of $X=x$. Let $J_0(x)$, $J_1(x)$, and $J_2(x)$ be the total cost functions corresponding to $U=0$, $U=1$, and $U=2$, respectively. Then we have,
$$
\begin{array}{lcl}
J_0(x) &=& x^2 \vspace{0.1 cm}
\\ J_1(x) &=& c_1 + \E[(x-\hat{X})^2] \vspace{0.1 cm}
\\ &=& c_1 + \E \big[(x-\frac{1}{\alpha}\frac{\gamma}{\gamma+1}(\alpha x+V))^2\big] \vspace{0.1 cm}
\\ &=& c_1 + \frac{1}{(\gamma+1)^2} \E\big[(x-\frac{\gamma}{\alpha}V)^2\big] \vspace{0.1 cm}
\\ &=& c_1 + \frac{1}{(\gamma+1)^2}\cdot x^2 + \frac{\gamma^2}{\alpha^2(\gamma+1)^2}\cdot\sigma_V^2 \label{second} \vspace{0.1 cm}
\\ J_2(x) &=& c_2
\end{array}
$$
Then,
$$
f(x) = \underset{i\in\{0,1,2\}}{\argmin}\; J_i(x)
$$
$J_0(x)$, $J_1(x)$, and $J_2(x)$ are symmetric around zero, hence we only need to consider the case when $x\geq 0$. Since $\frac{1}{(1+\gamma)^2}<1$, it is easy to check that there exist $\beta_{01}$ and $\beta_{02}$ such that
$$
\begin{array}{lcl}
J_0(x) \leq J_1(x),\; \mbox{iff} \;x\in[0,\beta_{01}] \vspace{0.1 cm}
\\ J_0(x) \leq J_2(x),\; \mbox{iff} \;x\in[0,\beta_{02}]
\end{array}
$$
Define $\beta_1:=\min\{\beta_{01},\beta_{02}\}$ and we have
\begin{equation}
\label{without side channel ball in ball 1}
\begin{array}{lcl}
&\;& J_0(x) \leq \min\{J_1(x),J_2(x)\}, \mbox{iff} \;x\in[0,\beta_1] \vspace{0.1 cm}
\\ &\Rightarrow&  f(x)=0 , \mbox{iff} \;x\in[0,\beta_1]
\end{array}
\end{equation}
Therefore, when computing $f(x)$ for $x\in(\beta_1,\infty)$, we only need to compare $J_1(x)$ with $J_2(x)$. Consider $J_1(x)$ and $J_2(x)$, since $J_1(x)$ is a parabolic function of $x$ and $J_2(x)$ is constant in $x$, either of the following cases occurs:

Case \rom{1}: $c_2<c_1+\frac{\gamma^2}{\alpha^2(\gamma+1)^2}\cdot\sigma_V^2$. Then
\begin{equation}
\label{without side channel ball in ball 2}
J_1(x)>J_2(x),\;\forall\; x>0 \Rightarrow f(x) = 2,\;\forall\; x\in(\beta_1,\infty)
\end{equation}

Case \rom{2}: $c_2\geq c_1+\frac{\gamma^2}{\alpha^2(\gamma+1)^2}\cdot\sigma_V^2$. It can be checked that there exists one threshold, call it $\beta_2$, such that $J_1(x) \leq J_2(x)$ if and only if $x\in[0,\beta_2]$. If $\beta_2<\beta_1$,
\begin{equation}
\label{without side channel ball in ball 3}
J_1(x)>J_2(x), x\in(\beta_1,\infty) \Rightarrow f(x) = 2, x\in(\beta_1,\infty)
\end{equation}
If $\beta_2\geq\beta_1$,
\begin{equation}
\begin{array}{l}
J_1(x)\leq J_2(x), x\in(\beta_1,\beta_2] \Rightarrow f(x) =1,  x\in(\beta_1,\beta_2]  \vspace{0.1 cm}
\\J_1(x)>J_2(x), x\in(\beta_2,\infty)\Rightarrow f(x) =2, x\in(\beta_2,\infty)
\end{array}
\label{without side channel ball in ball 4}
\end{equation}
Combining \eqref{without side channel ball in ball 1} through \eqref{without side channel ball in ball 4}, we conclude that the optimal communication scheduling policy $f(x)$ has the expression of \eqref{threshold in threshold expression}. Note that \eqref{without side channel ball in ball 1}-\eqref{without side channel ball in ball 2} and \eqref{without side channel ball in ball 1} and \eqref{without side channel ball in ball 3} are the special cases of \eqref{threshold in threshold expression} where $\beta_2=\beta_1$.
\end{proof}

Although Conjecture \ref{conjecture on symmetric communication policy}  and  Corollary \ref{corollary with conjecture on non side channel problem} seem very intuitive at first glance, the following counter example renders them not valid from the point of global optimality.

{\it Counter example}: Consider
$$
p_X(x) = \frac{1}{2L}, \;x\in[-L,L]
$$
which  is symmetric and unimodal. Assume  $c_1<c_2$. By Corollary \ref{corollary with conjecture on non side channel problem}, the optimal communication scheduling policy, denoted by $f^\ast$, is of the threshold-in-threshold type, which is described in \eqref{threshold in threshold expression} with thresholds $0<\beta_1^\ast<\beta_2^\ast$. By the proof of Corollary \ref{corollary with conjecture on non side channel problem}, $\beta_2^\ast$ is the separating point where $J_1(x) \leq J_2(x)$ if and only if $x\in[0,\beta^\ast_2]$. Consider $J_1(x)-J_2(x)$:
$$
\begin{array}{lcl}
J_1(x)-J_2(x) &=& c_1-c_2 + \frac{1}{(\gamma+1)^2}\cdot x^2 + \frac{\gamma^2}{\alpha^2(\gamma+1)^2}\cdot\sigma_V^2 \vspace{0.1 cm}
\\ &\geq & c_1-c_2 + \frac{1}{(\gamma+1)^2}\cdot x^2 \vspace{0.1 cm}
\\ &> & 0,\;\mbox{if}\;x>\sqrt{c_2-c_1}\cdot(\gamma+1)
\end{array}
$$
which implies that $\beta^\ast_2<\sqrt{c_2-c_1}\cdot(\gamma+1)$. Hence, by choosing $\sqrt{c_2-c_1}\cdot(\gamma+1)\ll L$, we have $\beta_2^\ast\ll L$. Denote by $J(f)$, the expected total cost if the sensor applies communication scheduling policy $f$. Then, we have
$$
\begin{array}{lcl}
\;\;\; J(f) = \E\big[c(U)+(X-\hat{X})^2\big] \vspace{0.1 cm}
\\ = \sum_{i=0}^2 \E\big[c(U)+(X-\hat{X})^2\big| X\in \mathcal{T}^f_{i}] \cdot \Prob(X \in \mathcal{T}^f_{i}) \vspace{0.1 cm}
\\ = \E[(X-\hat{X})^2| X\in \mathcal{T}^f_{0}] \cdot \Prob(X\in \mathcal{T}^f_{0}) + c_1 \cdot  \Prob(X\in \mathcal{T}^f_{1}) \vspace{0.1 cm}
\\ +\; \E[(X-\hat{X})^2| X\in \mathcal{T}^f_{1}] \cdot \Prob(X\in \mathcal{T}^f_{1}) + c_2 \cdot  \Prob(X\in \mathcal{T}^f_{2})
\end{array}
$$
Recall that when $X\in \mathcal{T}^f_{0}$ , $\hat{X} = \E[X| X\in \mathcal{T}^f_{0}]$. Hence, $\E\big[(X-\hat{X})^2\big| X\in \mathcal{T}^f_{0}]$ = $\Var(X| X\in \mathcal{T}^f_{0})$. Furthermore, by the results from \cite{akyol13SourceChannelCoding} discussed in section \ref{prior work}, $\E[(X-\hat{X})^2| X\in \mathcal{T}^f_{1}]=\frac{1}{\gamma+1}\Var(X| X\in \mathcal{T}^f_{1})$. Hence,
\begin{equation}
\label{expression of total cost function}
\begin{array}{lcl}
\;\;\; J(f) = \Var(X| X\in \mathcal{T}^f_{0}) \cdot \Prob(X\in \mathcal{T}^f_{0}) + c_1  \Prob(X\in \mathcal{T}^f_{1}) \vspace{0.2 cm}
\\ +\; \frac{1}{\gamma+1}\cdot\Var(X| X\in \mathcal{T}^f_{1}) \cdot \Prob(X\in \mathcal{T}^f_{1}) + c_2  \Prob(X\in \mathcal{T}^f_{2})
\end{array}
\end{equation}
Next, consider $f^\ast$; then $\mathcal{T}^{f^\ast}_{0}$ $=$ $[-\beta_1^\ast,\beta_1^\ast]$, $\mathcal{T}^{f^\ast}_{1} = [-\beta_2^\ast,-\beta_1^\ast)$ $\bigcup$ $(\beta_1^\ast,\beta_2^\ast]$, and $\mathcal{T}^{f^\ast}_{2} = [-L,-\beta_2^\ast)$ $\bigcup$ $(\beta_2^\ast,L]$. We now construct another communication scheduling policy $f^\prime$ as follows
$$
\begin{array}{lcl}
\mathcal{T}^{f^\prime}_{0} = \mathcal{T}^{f^\ast}_{0}, \mathcal{T}^{f^\prime}_{1} = (\beta_1^\ast,\beta_2^\ast] \;\bigcup\; (\beta_2^\ast,2\beta_2^\ast-\beta_1^\ast] \vspace{0.1 cm}
\\ \mathcal{T}^{f^\prime}_{2} = [-L,-\beta_1^\ast)\;\bigcup\;(2\beta_2^\ast-\beta_1^\ast,L]
\end{array}
$$
One can see that we shifted part of the noisy transmission region to make it connected, as illustrated in Fig. \ref{fig:counterExample}.
\begin{figure}[h]
\centering
\includegraphics[height=40mm, width = 85mm]{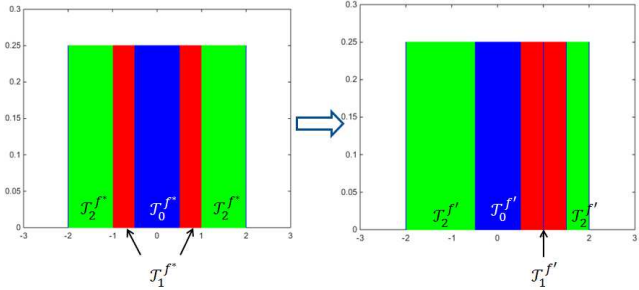}
\caption{The counter example}
\label{fig:counterExample}
\end{figure}
Since the source has uniform distribution, $\Prob(X\in \mathcal{T}^{f^\prime}_{1})$ $=\Prob(X\in \mathcal{T}^{f^\ast}_{1})$, and $\Prob(X\in \mathcal{T}^{f^\prime}_{2})$ $=\Prob(X\in \mathcal{T}^{f^\ast}_{2})$. Furthermore, the non-transmission region, $\mathcal{T}^{f^\ast}_0$, is not changed. Hence,
$$
\begin{array}{lcl}
\;\;\;J(f^\prime) - J(f^\ast) \vspace{0.2 cm}
\\ = \dfrac{\Prob(X\in \mathcal{T}^{f^\prime}_{1})}{\gamma+1}\cdot \Big(\Var(X| X\in \mathcal{T}^{f^\prime}_{1})-\Var(X| X\in \mathcal{T}^{f^\ast}_{1})\Big)
\end{array}
$$
Since $p_X$ is uniform, $\mathcal{T}^{f^\prime}_1$ and $\mathcal{T}^{f^\ast}_1$ have the same Lebesgue measure, and $\mathcal{T}^{f^\prime}_1$ is connected, we conclude that $\Var(X| X\in \mathcal{T}^{f^\prime}_{1})$ $<\Var(X| X\in \mathcal{T}^{f^\ast}_{1})$ and thus, $J(f^\prime)<J(f^\ast)$ which generates a contradiction. In the special case where $\beta^\ast_1 = \beta^\ast_2$, one can also come up with a counter example by replacing part of perfect transmission region $\mathcal{T}_2^{f^\ast}$ by noisy transmission region $\mathcal{T}_1^{f^\prime}$.

\begin{remark}
The counter example above shows that noisy transmission region in the symmetric communication policy may be disconnected. As discussed in section \ref{prior work}, MMSE of the zero delay communication problem is proportional to  $\Var(X| X\in \mathcal{T}^{f}_{1})$. Splitted noisy transmission region results in large $\Var(X| X\in \mathcal{T}^{f}_{1})$, and thus does not take full advantage of the noisy channel.
\end{remark}

In order to have symmetric noisy transmission region to render the problem tractable, we further assume the existence of side channel.
\subsection{Modified Problem}
Assume that there exists a perfect side channel between the encoder and the decoder. When transmitting the encoded message $Y$, the encoder also sends the sign of the source, denoted by $S$, to the decoder over the side channel. 
The decoder generates the estimate $\hat{X}$ based on the received messages $(\tilde{Y},S)$. Note that side-channel will not be used if the sensor chooses not to transmit the source output or to transmit it over the perfect channel. Hence, we have
$$
S =
\begin{cases}
\sgn(X), &\mbox{ if } U=1 \vspace{0.1 cm}
\\ \epsilon, &\mbox{otherwise}
\end{cases}
$$
where $\epsilon$ stands for nothing being transmitted. Note that the side channel enables using  different encoding/decoding policies for the positive and negative input signal. Hence, we need to modify Assumption 3 (we keep Assumptions 1 and 2).
\begin{assumption}
 The encoder and the decoder are restricted to apply piecewise affine policies:
$$
\begin{array}{lcl}
g(X,S) &=& S \cdot \alpha \cdot \left(X - \E\left[X|U=1,S\right] \right) \vspace{0.1cm}
\\ h(\tilde{Y},S) &=& S \cdot \frac{1}{\alpha}\frac{\gamma}{\gamma+1}\tilde{Y}+\E\left[X|U=1,S\right]
\end{array}
$$
where $\gamma:=\frac{P_T}{\sigma_V^2}$ is the signal to noise ratio, $\alpha=\sqrt{\frac{P_T}{\Var(X|U=1,S)}}$. $\Var(X|U=1,S)$ is the conditional variance.
\end{assumption}


Let $\mathcal{T}^f_{1+}$, $\mathcal{T}^f_{1-}$ be the {\it positive noisy transmission region} and the {\it negative noisy transmission region}, respectively,
according to communication policy $f$, i.e.,
$$
\mathcal{T}^f_{1+} = \{x>0|f(x) = 1\},\;\;\mathcal{T}^f_{1-} = \{x<0|f(x) = 1\}
$$
Note that even under the assumption of symmetric communication scheduling policy, we may still have connected positive/negative noisy transmission regions, which would result in small conditional variance. Therefore, we still have Conjecture \ref{conjecture on symmetric communication policy}. Then we can show that Corollary \ref{corollary with conjecture on non side channel problem} still holds based on Conjecture \ref{conjecture on symmetric communication policy}.

\begin{corollary}
\label{corollary for modified problem}
In the modified problem, if the sensor is restricted to apply symmetric communication scheduling policy described in Conjecture \ref{conjecture on symmetric communication policy}, then the optimal communication scheduling policy is of the threshold-in-threshold type described by \eqref{threshold in threshold expression} in Corollary \ref{corollary with conjecture on non side channel problem}.
\end{corollary}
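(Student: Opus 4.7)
The plan is to adapt the proof of Corollary 1 to the side-channel setting, since under Assumption 4 the only new ingredient is the piecewise affine encoder/decoder. I would first extract the symmetry consequences: from $f(x)=f(-x)$ it follows that $\mathcal{T}_0^f$, $\mathcal{T}_1^f$, $\mathcal{T}_2^f$, and $\mathcal{T}_{1+}^f=-\mathcal{T}_{1-}^f$ are all symmetric around zero. Hence $\hat X=\E[X\mid X\in\mathcal T_0^f]=0$ on $\mathcal{T}_0^f$, $\hat X=X$ on $\mathcal{T}_2^f$, and on $\mathcal{T}_1^f$ one has $\mu_+:=\E[X|U=1,S=+1]$ with $\E[X|U=1,S=-1]=-\mu_+$ and $\Var(X|U=1,S=+1)=\Var(X|U=1,S=-1)$, so the piecewise affine policy acts mirror-symmetrically on the two halves of $\mathcal{T}_1^f$.

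Next I would substitute the piecewise affine rules into the squared error. For $x>0$ a short computation gives $x-\hat X = (x-\mu_+)/(\gamma+1)-\gamma V/(\alpha(\gamma+1))$, and the analogous identity for $x<0$, together with $\mu_-=-\mu_+$, yields the same squared distortion once expressed in $|x|$. Taking expectation over $V$, the stagewise costs become
$$
J_0(x)=x^2,\;\; J_1(x)=c_1+\frac{(|x|-\mu_+)^2}{(\gamma+1)^2}+\frac{\gamma^2\sigma_V^2}{\alpha^2(\gamma+1)^2},\;\; J_2(x)=c_2,
$$
each symmetric in $x$, so the pointwise minimizer $f(x)=\arg\min_i J_i(x)$ need only be analyzed on $|x|\ge 0$.

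From here I would repeat the case analysis of Corollary 1 on the half-line. The upward-opening parabola $J_0-J_1$ satisfies $J_0(0)<J_1(0)$, so $\{J_0\le J_1\}=[0,\beta_{01}]$; likewise $\{J_0\le J_2\}=[0,\beta_{02}]$; setting $\beta_1:=\min(\beta_{01},\beta_{02})$ isolates $\{f=0\}=\{|x|\le\beta_1\}$. For $|x|>\beta_1$, comparing $J_1$ with $J_2$ gives two subcases: either $c_2<c_1+\gamma^2\sigma_V^2/(\alpha^2(\gamma+1)^2)$, in which $J_1>J_2$ throughout and $f=2$, or its complement, in which an outer threshold $\beta_2\ge\beta_1$ emerges with $f=1$ on $(\beta_1,\beta_2]$ and $f=2$ on $(\beta_2,\infty)$; either way one recovers \eqref{threshold in threshold expression}.

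The principal obstacle, absent from Corollary 1, is that $J_1$ here is a parabola in $|x|$ centered at $\mu_+\ge 0$ rather than at zero, so a priori $\{J_1\le J_2\}$ on $|x|\ge 0$ is an interval $[\beta'_L,\beta'_H]$ that need not touch the origin; if $\beta'_L>\beta_1$ the minimizer would alternate perfect/noisy/perfect and the corollary would fail. To close this I would invoke the fixed-point consistency $\mu_+=\E[X\mid X\in\mathcal T_{1+}^f]$ together with the symmetric unimodality of $p_X$: for a threshold-in-threshold candidate $f$ one has $\mu_+\in(\beta_1,\beta_2)$, and checking $J_1(\beta_1)\le J_2(\beta_1)$ at the fixed point forces $\beta'_L\le\beta_1$, after which the chain \eqref{without side channel ball in ball 1}--\eqref{without side channel ball in ball 4} transports verbatim.
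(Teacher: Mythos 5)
Your setup is correct and matches the paper's: the symmetry consequences $\mu_-=-\mu_+$, the piecewise-affine policy, the three cost curves
$J_0(x)=x^2$, $J_1(x)=c_1+\frac{(|x|-\mu_+)^2}{(\gamma+1)^2}+\frac{\gamma^2\sigma_V^2}{\alpha^2(\gamma+1)^2}$, $J_2(x)=c_2$,
the evenness that lets you restrict to $x\ge 0$, and the isolation of $\{f=0\}=[0,\beta_1]$ with $\beta_1=\min(\beta_{01},\beta_{02})$. You also correctly put your finger on the real obstacle: because $J_1$ is a parabola in $|x|$ centered at $\mu_+>0$, the set $\{J_1\le J_2\}\cap[0,\infty)$ can be an interval $[\beta'_L,\beta'_H]$ bounded away from $0$, so on $(\beta_1,\infty)$ the pointwise minimizer may alternate perfect/noisy/perfect.

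The gap is in how you try to dispose of this case. You propose to show $\beta'_L\le\beta_1$ via ``fixed-point consistency,'' arguing that for a threshold-in-threshold candidate $\mu_+\in(\beta_1,\beta_2)$ and then $J_1(\beta_1)\le J_2(\beta_1)$. This is circular (it presupposes the candidate is threshold-in-threshold, which is what you want to prove) and, more importantly, it is false in a subcase: if $\beta_1=\beta_{02}=\sqrt{c_2}<\beta_{01}$, then $J_2(\beta_1)=J_0(\beta_1)\le J_1(\beta_1)$, i.e.\ $J_1(\beta_1)\ge J_2(\beta_1)$, which is exactly the inequality pointing toward the perfect/noisy/perfect pattern, not away from it. So the alternating case cannot be ruled out purely from the geometry of $J_0,J_1,J_2$ at a fixed point.

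The paper instead accepts that the alternating structure (its Case~\rom{3}, Eq.~\eqref{ball in ball in ball}) can arise as a pointwise minimizer, and defuses it with a rearrangement/exchange argument: given any symmetric $f$ whose positive noisy region is $(\beta_{12,l},\beta_{12,r}]$ with $\beta_{12,l}>\beta_1$, construct $f'$ by sliding that region inward to $(\beta_1,\beta'_2]$ with $\beta'_2$ chosen so that $\int_{\beta_1}^{\beta'_2}p_X = \int_{\beta_{12,l}}^{\beta_{12,r}}p_X$. This preserves the probabilities of all three regions, and because $p_X$ is symmetric unimodal, moving the interval toward the origin does not increase $\Var(X\mid X\in\mathcal{T}^{f}_{1\pm})$; plugging into the cost decomposition \eqref{expression of J(f)} gives $J(f')\le J(f)$. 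That step is the essential content of the proof and is missing from your proposal; without it, Case~\rom{3} remains unresolved.
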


\begin{proof}
We use an argument similar to that in the proof of Corollary \ref{corollary with conjecture on non side channel problem}. When $X=x\in \mathcal{T}^f_{0}$, $\hat{X}=0$. Furthermore, by the symmetry of $f$, we have $\mathcal{T}^f_{1+} = -\mathcal{T}^f_{1-}$. Since $p_X$ is symmetric, we have
$$
\begin{array}{rcr}
\E\left[X|U=1,S=+1\right] &=& \E\left[X|X\in\mathcal{T}^f_{1+}\right] \vspace{0.1 cm}
\\ =  -\E\left[X|U=1,S=-1\right] &=& -\E\left[X|X\in\mathcal{T}^f_{1-}\right]
\end{array}
$$
Let $b:=\E[X|U=1,S=+1]$. Then $\E[X|U=1,S]$$=Sb$. $\gamma$ is known, while $\alpha$ and $b$ depend on the choice of $f$ and $p_X$. For any $\alpha,b>0$, any realization of source output $X=x$, and the corresponding realization of $S$, denoted by $S=s=\sgn(x)$, we can compute the total cost functions $J_0(x)$, $J_1(x)$ and $J_2(x)$ as follows:
\begin{equation}
\label{three total cost functions}
\begin{array}{lcl}
J_0(x) &=& x^2 \vspace{0.1 cm}
\\ J_1(x) &=& c_1 + \E[(x-\hat{X})^2] \vspace{0.1 cm}
\\ &=& c_1 + \E \big[(x-\frac{1}{\alpha}\frac{\gamma}{\gamma+1}(\alpha x-\alpha s b+sV)-sb)^2\big] \vspace{0.1 cm}
\\ &=& c_1 + \frac{1}{(\gamma+1)^2} \E\big[(x-sb-\frac{\gamma}{\alpha}sV)^2\big] \vspace{0.1 cm}
\\ &=& c_1 + \frac{1}{(\gamma+1)^2}\cdot (x-sb)^2 + \frac{\gamma^2}{\alpha^2(\gamma+1)^2}\cdot\sigma_V^2 \vspace{0.1 cm}
\\ &=& c_1 + \frac{1}{(\gamma+1)^2}\cdot (|x|-b)^2 + \frac{\gamma^2}{\alpha^2(\gamma+1)^2}\cdot\sigma_V^2 \vspace{0.1 cm}
\\ J_2(x) &=& c_2
\end{array}
\end{equation}
where the second last equality is due to the fact that $x = s|x|$ and $s^2=1$. Since $J_0(x)$, $J_1(x)$, and $J_2(x)$ are even functions of $x$, we only need to consider the case where $x\geq 0$. It is easy to see that there exists $\beta_{02}$ such that
$$
J_0(x) \leq J_2(x),\; \mbox{iff} \;x\in[0,\beta_{02}]
$$
$J_0(x)$ and $J_1(x)$ are quadratic functions, $J_0(0)-J_1(0)<0$. Furthermore,
$$
\frac{d}{dx}\Big( J_0(x)-J_1(x)\Big) = \frac{2\gamma^2+4\gamma}{(\gamma+1)^2}x + \frac{2b}{(\gamma+1)^2} > 0,\;\forall\; x > 0
$$
Hence, there exists $\beta_{01}$ such that
$$
J_0(x) \leq J_1(x),\; \mbox{iff} \;x\in[0,\beta_{01}]
$$
Therefore,
$$
\begin{array}{lcl}
&\;& J_0(x) \leq \min\{J_1(x),J_2(x)\}, \mbox{iff} \;x\in[0,\beta_1] \vspace{0.1 cm}
\\ &\Rightarrow&  f(x)=0 , \mbox{iff} \;x\in[0,\beta_1]
\end{array}
$$
where $\beta_1 = \min\{\beta_{01},\beta_{02}\}$. Hence when considering $x\in(\beta_1,\infty)$, we only need to compare $J_1(x)$ with $J_2(x)$. $J_1(x)$ is a porabolic opening upward, and $J_2(x)$ is constant. Therefore when $x\in(\beta_1,\infty)$ there are three possibilities. Case \rom{1}: $J_1(x)$ and $J_2(x)$ do not intersect, which implies
$$
J_1(x) > J_2(x), x\in(\beta_1,\infty) \Rightarrow f(x) = 2, x\in(\beta_1,\infty)
$$
Case \rom{2}: $J_1(x)$ and $J_2(x)$ intersect once at $\beta_{12,r}$, which implies
$$
\begin{array}{lcl}
J_1(x) \leq J_2(x)\Rightarrow f(x) = 1, x\in(\beta_1,\beta_{12,r}] \vspace{0.1 cm}
\\ J_1(x) > J_2(x)\Rightarrow f(x) = 2, x\in(\beta_{12,r},\infty)
\end{array}
$$
Case \rom{3}: $J_1(x)$ and $J_2(x)$ intersect twice at $\beta_{12,l}$ and $\beta_{12,r}$, which implies
\begin{equation}
\label{ball in ball in ball}
\begin{array}{lcl}
J_1(x) \geq J_2(x)\Rightarrow f(x) = 2, x\in(\beta_1,\beta_{12,l}] \vspace{0.1 cm}
\\ J_1(x) \leq J_2(x)\Rightarrow f(x) = 1, x\in(\beta_{12,l},\beta_{12,r}] \vspace{0.1 cm}
\\ J_1(x) > J_2(x)\Rightarrow f(x) = 2, x\in(\beta_{12,r},\infty)
\end{array}
\end{equation}
In cases \rom{1} and \rom{2}, the optimal communication scheduling policies are of the threshold-in-threshold type, while that conclusion does not hold for case \rom{3}. We now show that for any symmetric communication scheduling policy $f$ in the form of \eqref{ball in ball in ball}, we can come up with a threshold-in-threshold type policy achieving no higher cost. Consider a communication scheduling policy $f$ described by \eqref{ball in ball in ball} with thresholds $\beta_1$, $\beta_{12,l}$, and $\beta_{12,r}$. Then,
$$
\begin{array}{lcl}
\mathcal{T}^f_0 &=& [-\beta_1,\beta_1], \vspace{0.1 cm}
\\ \mathcal{T}^f_{1+} &=& (\beta_{12,l},\beta_{12,r}] , \;\;\;\mathcal{T}^f_{1-} = [-\beta_{12,r},-\beta_{12,l}) , \vspace{0.1 cm}
\\  \mathcal{T}^f_2 &=& (-\infty,-\beta_{12,r})\;\bigcup\;[-\beta_{12,l},-\beta_1) \vspace{0.1 cm}
\\ &\;&\;\;\;\bigcup\;  (\beta_1,\beta_{12,l}]\; \bigcup \;(\beta_{12,r},\infty)
\end{array}
$$
Similar to \eqref{expression of total cost function}, the total expected cost by applying $f$ can be computed as
\begin{equation}
\label{expression of J(f)}
\begin{array}{lcl}
\;\;\; J(f) = \Var(X| X\in \mathcal{T}^f_{0}) \cdot \Prob(X\in \mathcal{T}^f_{0}) + c_1  \Prob(X\in \mathcal{T}^f_{1+}) \vspace{0.2 cm}
\\ +\; \frac{1}{\gamma+1}\cdot\Var(X| X\in \mathcal{T}^f_{1+}) \cdot \Prob(X\in \mathcal{T}^f_{1+}) + c_1  \Prob(X\in \mathcal{T}^f_{1-}) \vspace{0.2 cm}
\\ +\; \frac{1}{\gamma+1}\cdot\Var(X| X\in \mathcal{T}^f_{1-}) \cdot \Prob(X\in \mathcal{T}^f_{1-}) +  c_2  \Prob(X\in \mathcal{T}^f_{2})
\end{array}
\end{equation}
Based on $f$, construct $f'$ as follows:
$$
\begin{array}{lcl}
\mathcal{T}^{f'}_0 &=& [-\beta_1,\beta_1], \vspace{0.1 cm}
\\ \mathcal{T}^{f'}_{1+} &=& (\beta_1,\beta_2^\prime] , \;\;\;\mathcal{T}^{f'}_{1-} = [-\beta_2^\prime,-\beta_1) , \vspace{0.1 cm}
\\  \mathcal{T}^{f'}_2 &=& (-\infty,-\beta_2^\prime)\; \bigcup \;(\beta_2^\prime,\infty)
\end{array}
$$
where $\beta_2^\prime$ is selected such that
$$
\int_{\beta_1}^{\beta_2^\prime} p_X(x) dx = \int_{\beta_{12,l}}^{\beta_{12,r}} p_X(x) dx
$$
One can see that we have shifted the positions of $\mathcal{T}^{f}_{1+}$ and $\mathcal{T}^{f}_{1-}$, but kept the probabilities over the regions the same, as illustrated in Fig. \ref{fig:counterExample2}.
\begin{figure}[h]
\centering
\includegraphics[height=40mm, width = 85mm]{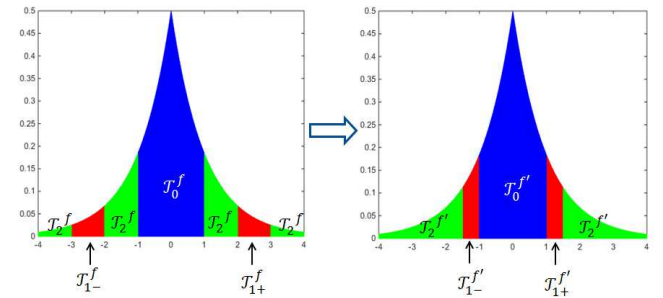}
\caption{Counter example illustrating $f'$ is no worse than $f$}
\label{fig:counterExample2}
\end{figure}
Hence, $\Prob(X\in\mathcal{T}^{f'}_{1+}) = \Prob(X\in\mathcal{T}^{f}_{1+})$, $\Prob(X\in\mathcal{T}^{f'}_{1-}) = \Prob(X\in\mathcal{T}^{f}_{1-})$, and $\Prob(X\in\mathcal{T}^{f'}_{2}) = \Prob(X\in\mathcal{T}^{f}_{2})$. Furthermore, $\mathcal{T}^{f'}_{1+}$ and $\mathcal{T}^{f'}_{1-}$ are closer to the origin than $\mathcal{T}^{f}_{1+}$ and $\mathcal{T}^{f}_{1-}$, respectively, and $p_X$ is symmetric and unimodal; hence it can be shown that
$$
\begin{array}{lcl}
\Var(X| X\in \mathcal{T}^{f'}_{1+})\leq \Var(X| X\in \mathcal{T}^f_{1+}) \vspace{0.1 cm}
\\ \Var(X| X\in \mathcal{T}^{f'}_{1-})\leq \Var(X| X\in \mathcal{T}^f_{1-})
\end{array}
$$
Therefore, $J(f')\leq J(f)$, which completes the proof.
\end{proof}
 \begin{figure*}
    \centering
      \begin{subfigure}{0.32\textwidth}
        \includegraphics[width=\textwidth]{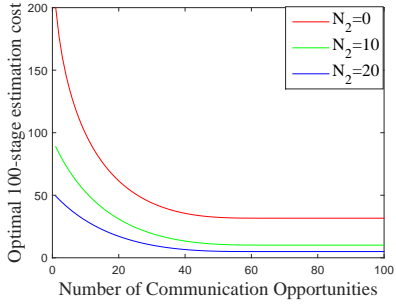}
          \caption{ 100-stage estimation error vs. $N_1$.}
          \label{fig:fixN2varyN1}
      \end{subfigure}
      \hfill
      \begin{subfigure}{0.32\textwidth}
        \includegraphics[width=\textwidth]{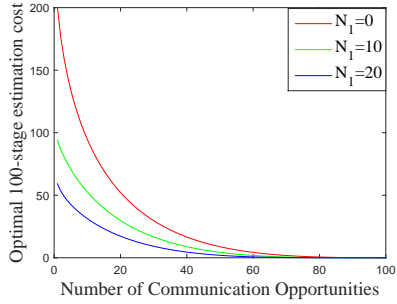}
          \caption{ 100-stage estimation error vs. $N_2$.}
          \label{fig:fixN1varyN2}
      \end{subfigure}
      \hfill
      \begin{subfigure}{0.32\textwidth}
        \includegraphics[width=\textwidth]{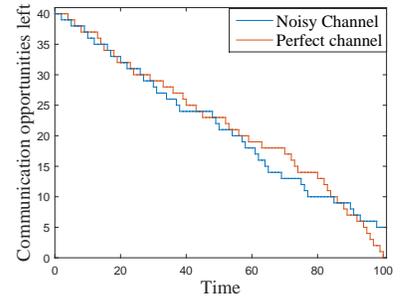}
          \caption{A sample path.}
          \label{fig:A sample path}
      \end{subfigure}

\caption{Numerical analysis.}
\label{fig:NiceImage}

\end{figure*}

With Corollary \ref{corollary for modified problem}, we reduce the optimization problem over a function space to the optimization problem over a two-dimensional space. We now compute the optimal thresholds: let $J(\beta_1,\beta_2)$ be the expected total cost corresponding to a threshold-in-threshold based communication scheduling policy $f$ with thresholds $\beta_1$ and $\beta_2$. Then \eqref{expression of J(f)} can be further computed by plugging in the expression of $f$ and applying the symmetry property of $p_X$,
$$
\begin{array}{lcl}
J(\beta_1,\beta_2) = 2\int_0^{\beta_1} x^2 p_X(x) dx + 2c_1\int_{\beta_1}^{\beta_2} p_X(x) dx + \frac{2}{\gamma+1} \vspace{0.1 cm}
\\ \cdot \Var(X|  X\in(\beta_1,\beta_2)) \int_{\beta_1}^{\beta_2} p_X(x) dx  + 2c_2 \int_{\beta_2}^{\infty} p_X(x) dx
\end{array}
$$
Taking the first derivative of $J(\beta_1,\beta_2)$ with respect to $\beta_1$, we have
$$
\begin{array}{lcl}
\frac{dJ(\beta_1,\beta_2)}{d\beta_1} = 2 \beta_1^2 \cdot p_X(\beta_1) - 2 c_1 \cdot p_X(\beta_1) \vspace{0.1 cm}
\\ + \frac{2}{\gamma+1}\cdot \frac{d}{d\beta_1}\left(\Var(X|  X\in(\beta_1,\beta_2))\int_{\beta_1}^{\beta_2} p_X(x) dx\right)
\end{array}
$$
where
$$
\begin{array}{lcl}
\dfrac{d}{d\beta_1}\left(\Var(X|  X\in(\beta_1,\beta_2))\int_{\beta_1}^{\beta_2} p_X(x)dx\right) \vspace{0.1 cm}
\\ = \dfrac{d}{d\beta_1}\left(\int_{\beta_1}^{\beta_2}x^2p_X(x)dx - \frac{\big(\int_{\beta_1}^{\beta_2}xp_X(x)dx\big)^2}{\int_{\beta_1}^{\beta_2}p_X(x)dx}  \right) \vspace{0.1 cm}
\\ =  -\beta_1^2p_X(\beta_1) + \frac{2\beta_1 p_X(\beta_1)\int_{\beta_1}^{\beta_2}xp_X(x)dx \cdot \int_{\beta_1}^{\beta_2}p_X(x)dx}{\big(\int_{\beta_1}^{\beta_2}p_X(x)dx\big)^2} \vspace{0.1 cm}
\\ \;\;\;\;- \frac{\big(\int_{\beta_1}^{\beta_2}xp_X(x)dx\big)^2 \cdot p_X(\beta_1)}{\big(\int_{\beta_1}^{\beta_2}p_X(x)dx\big)^2} \vspace{0.1 cm}
\\ = -p_X(\beta_1)\cdot\Big(\beta_1-\E[X|X\in(\beta_1,\beta_2)] \Big) ^2
\end{array}
$$
Similarly, we can compute the $\frac{d}{d\beta_2}J(\beta_1,\beta_2)$. By the first order optimality condition, the locally optimal thresholds $(\beta_1,\beta_2)$ should satisfy
\begin{equation}
\label{first order optimality}
\begin{array}{lcl}
\beta_1^2-\frac{1}{\gamma+1}\big(\beta_1-\E[X|X\in(\beta_1,\beta_2)]\big)^2- c_1 = 0 \vspace{0.1 cm}
\\ \frac{1}{\gamma+1}\big(\beta_2-\E[X|X\in(\beta_1,\beta_2)]\big)^2 + c_1-c_2 = 0 \vspace{0.1 cm}
\\ \E[X|X\in(\beta_1,\beta_2)] = \int_{\beta_1}^{\beta_2}xp_X(x)dx
\end{array}
\end{equation}
where $0\leq \beta_1 \leq \beta_2$. Once we obtain solution(s) of \eqref{first order optimality}, which are extrema of $J$,  we need to compare $J$ evaluated at the inner extrema with $J$ evaluated at the boundaries, i.e. (i) $0=\beta_1<\beta_2$, (ii) $0<\beta_1<\beta_2 = \infty$, (iii) $0\leq\beta_1=\beta_2<\infty$. The one achieving the lowest cost is the global optimal solution. Since $0\leq c_1,c_2<\infty$ and $X$ has support $\mathbb{R}$, it is easy to verify that the first two boundaries are not optimal by analyzing \eqref{three total cost functions}. Consider the third boundary $0\leq\beta_1=\beta_2<\infty$, the optimization problem collapses to the optimization problem with one perfect channel. By the results from \cite{Lipsa11}, the optimal thresholds are $\beta_1=\beta_2 = \sqrt{c_2}$. Hence, we only need to compare the performances of the inner extrema with that of $\beta_1=\beta_2 = \sqrt{c_2}$.

The existence and uniqueness of solution of \eqref{first order optimality} is not guaranteed for general parameters and densities. On the one hand, if $c_1 > c_2$, \eqref{first order optimality} does not admits a solution (see the second equation). On the other hand, since $\E[X|X\in(\beta_1,\beta_2)]$ depends on the source density $p_X$, it is hard to analyze the existence and uniqueness of the solution. For the first issue, when $c_1\geq c_2$, there is no side-effect by choosing perfect channel rather than noisy channel. Then optimization problem collapses to the optimization problem with one perfect channel. For the second case, we specify the source to have Laplace density with parameters $(0,\lambda^{-1})$, namely,
$$
p_X(x) =
\begin{cases}
\frac{1}{2}\lambda\; e^{-\lambda x},\;& x \geq 0 \vspace{0.1 cm}
\\ \frac{1}{2}\lambda\; e^{\lambda x} ,\;& x<0
\end{cases}
$$
Then $p_X$ is symmetric and unimodal. Furthermore, conditioning on $X>0$, $p_{X|X>0}$ has exponential distribution with parameter $\lambda$. Plugging for $p_X$ into \eqref{first order optimality}, and by memoryless property of exponential distribution, we have
\begin{equation}
\label{nonlinear equations we solved}
\begin{array}{lcl}
\dfrac{\Delta\beta \cdot e^{\lambda\Delta\beta}}{e^{\lambda\Delta\beta}-1} = \dfrac{1}{\lambda} + \sqrt{(c_2-c_1)(1+\gamma)} \vspace{0.1 cm}
\\ \beta_1 = \sqrt{c_1 + \dfrac{1}{1+\gamma}\Big(\Delta\beta-\sqrt{(c_2-c_1)(1+\gamma)}\Big)^2}
\end{array}
\end{equation}
where $\Delta\beta:=\beta_2-\beta_1$, $\Delta\beta > 0$. It can be verified that $\frac{\Delta\beta \cdot e^{\lambda\Delta\beta}}{e^{\lambda\Delta\beta}-1}$ is an increasing function of $\Delta\beta$ and $\frac{\Delta\beta \cdot e^{\lambda\Delta\beta}}{e^{\lambda\Delta\beta}-1}\in(\frac{1}{\lambda},\infty)$. Hence, when $c_2>c_1$, the first equation has unique solution, which uniquely determines $\beta_1$ in the second equation, and $\beta_2=\Delta\beta+\beta_1$.

\section{The Problem with Hard Constraint}
Consider the modified problem with hard constraint. Let $E_{t}^n$ and $E_{t}^p$ be the communication opportunities left at time $t$ for the noisy channel and the perfect channel, respectively. Then,
$$
E_{t}^n = N_1 - \sum_{i=1}^{t-1} \mathds{1}_{\{U_i = 1\}},\;\;\;E_{t}^p = N_2 - \sum_{i=1}^{t-1} \mathds{1}_{\{U_i = 2\}}
$$
Furthermore, let $J(t,E_{t}^n,E_{t}^p)$ be the optimal cost to go when the problem is initialized at time $t$ with $E_{t}^n$ and $E_{t}^p$ number of communication opportunities for noisy channel and perfect channel, respectively. By an argument similar to that in \cite[Theorems 2 and 3]{GaoCDC15}, the optimal decision policy at time $t$ has the form of $ U_t = f_t(X_t,E_{t}^n,E_{t}^p)$, $Y_t = g_t(X_t,S_t,E_{t}^n,E_{t}^p)$, and $\hat{X}_t = h_t(\tilde{Y}_t,S_t,E_{t}^n,E_{t}^p)$. Furthermore, the optimal cost to go $J(t,E_{t}^n,E_{t}^p)$ can be computed by solving the dynamic programming (DP) equation:
$$
\begin{array}{ccl}
&\;&J^\ast(t,E_{t}^n,E_{t}^p) \vspace{0.1 cm}
\\  &=& \underset{f_t,g_t,h_t}{\min} \{\E[(X_t-\hat{X}_t)^2]+ \E[J^\ast(t+1,E_{t+1}^n,E_{t+1}^p)]\}
\end{array}
$$
with boundary conditions $J^\ast(T+1,\cdot,\cdot) = 0$. Depending on the realization of $X_t$, $E_{t+1}^n$ may be $E_t^n$ or $E_t^n-1$, and $E_{t+1}^p$ may be $E_t^p$ or $E_t^p-1$. Hence the dynamic programming equation can be written as
$$
\label{dynamic programming equation}
\begin{array}{lcl}
J^\ast(t,E_t^n,E_t^p) = \underset{f_t,g_t,h_t}{\min} \bigg\{\E[(X_t-\hat{X}_t)^2]+J^\ast(t+1,E_t^n, \vspace{0.1 cm}
\\E_t^p)\cdot \displaystyle \int_{\mathcal{T}_0^{f_t}} p_X(x)dx + J^\ast(t+1,E_t^n-1,E_t^p) \vspace{0.1 cm}
\\ \cdot \displaystyle \int_{\mathcal{T}_1^{f_t}} p_X(x)dx  + J^\ast(t+1,E_t^n,E_t^p-1)\cdot \displaystyle \int_{\mathcal{T}_2^{f_t}} p_X(x)dx\bigg\} \vspace{0.1 cm}
\\ = J^\ast(t+1,E_t^n,E_t^p) + \underset{f_t,g_t,h_t}{\min} \bigg\{\E[(X_t-\hat{X}_t)^2]+ c_{1t}(E_t^n  \vspace{0.1 cm}
\\  ,E_t^p)\cdot \displaystyle \int_{\mathcal{T}_1^{f_t}} p_X(x)dx + c_{2t}(E_t^n,E_t^p)\cdot \displaystyle  \int_{\mathcal{T}_2^{f_t}} p_X(x)dx \bigg\} \vspace{0.1 cm}
\end{array}
$$
where $c_{1t}(E_t^n,E_t^p)=J^\ast(t+1,E_t^n-1,E_t^p)-J^\ast(t+1,E_t^n,E_t^p)$ and $c_{2t}(E_t^n,E_t^p)=J^\ast(t+1,E_t^n,E_t^p-1)-J^\ast(t+1,E_t^n,E_t^p)$. Then the problem inside $\min\{\cdot\}$ is a one stage problem with soft constraint, and communication costs are $c_{1t}(E_t^n,E_t^p)$ and $c_{2t}(E_t^n,E_t^p)$ for using the noisy channel and the perfect channel, respectively.

We assume the source process has Laplace density with parameter $(0,\lambda^{-1})$. Furthermore, we restrict sensor to apply symmetric communication scheduling strategy and encoder/decoder to apply affine encoding/decoding strategies described in Assumption 4. Then the optimal communication scheduling policy at time $t$ is threshold-in-threshold based with thresholds $(\beta_1,\beta_2)$ solved from \eqref{nonlinear equations we solved} or thresholds $(\beta'_1,\beta'_2)$ on the boundary, i.e., $\beta'_1=\beta'_2$. In order to investigate the performances of the decision policies, we solved the DP equation numerically with $\lambda=1$ and SNR $\gamma = 1$. 

In Fig. \ref{fig:fixN2varyN1}, we fix the number of communication opportunities for perfect channel, as $N_2=0,10,20$, and we  plot the optimal 100-stage estimation error versus $N_1$. When $N_2=0$, there is no perfect channel, the problem collapses to the one in \cite{GaoCDC15}. As discussed in \cite{GaoCDC15}, there exists an opportunity threshold such that the optimal 100-stage estimation error is decreasing when the number of communication opportunities is below the threshold, and staying constant above the threshold. The existence of opportunity threshold remains in the case when there is perfect channel. Furthermore, the higher $N_2$ is, the lower is the optimal 100-stage estimation error.

Fig. \ref{fig:fixN1varyN2} illustrates the performances of decision strategies when $N_1$ is fixed as $N_1=0,10,20$, and $N_2$ varies over $\{0,1,\ldots,100\}$. When $N_1=0$, there is no noisy channel, the problem collapses to the problem with perfect channel, and the plot recovers the result in \cite{Imer10}. As shown in \cite{Imer10}, the optimal estimation costs over the time horizon decreases to $0$ as the number of communication opportunities for perfect channel increases to $100$, which is the length of the time horizon. This trend remains for the case when there is noisy channel. Moreover, the more communication opportunities sensor has for the noisy channel, the lower is the optimal 100-stage estimation error.

Fig. \ref{fig:A sample path} depicts a sample path illustrating the evolution of the numbers of communication opportunities over time horizon. When generating the plot, we chose the initial numbers of communication opportunities for noisy channel $N_1=40$, and that of perfect channel, $N_2=40$. One can see that by the end of time horizon, the sensor used up all the communication opportunities for the perfect channel, but not all the communication opportunities for the noisy channel. This surprising result is due to the fact that the thresholding information, that is, whether the source realization belongs a certain interval or not, can be more informative than a noisy output from the communication channel. More discussions on interpretations of a similar result can be found in \cite[Remark 3]{GaoCDC15}.

\section{Conclusions}
In this paper, we have analyzed the impact of an additional noisy communication channel over the classical remote estimation problems. We have shown that while the intuitive solution of applying  threshold-in-threshold transmission policy may be suboptimal for the original problem, it will be optimal, under some assumptions, for the setting with a side channel. We have determined optimal policies numerically for both hard and soft constrained problems. The numerical solutions exhibit several interesting properties that are inherited from the noisy and noiseless settings.

\bibliographystyle{unsrt}
\bibliography{references}

\end{document}